\newtheorem{theorem}{Theorem}[section]
\newtheorem{proposition}[theorem]{Proposition}
\theoremstyle{remark}
\theoremstyle{definition}
\theoremstyle{example}
\theoremstyle{notation}
\newcommand{\bra}[1]{\langle#1|}
\newcommand{\ket}[1]{|#1\rangle}
\begin{document}

\title{Uncertainty relations in terms of the Gini index for finite quantum systems }            
\author{A. Vourdas\\Department of Computer Science,\\
University of Bradford, \\
Bradford BD7 1DP, United Kingdom\\a.vourdas@bradford.ac.uk}

\maketitle

\section*{Abstract}
Lorenz values and the Gini index are popular quantities in Mathematical Economics, and are used here in the context of quantum systems with finite-dimensional Hilbert space.
They quantify the uncertainty in the probability distribution related to an orthonormal basis.
It is shown that Lorenz values are superadditive functions and the Gini indices are subadditive functions.
The supremum over all density matrices of  the sum of the two Gini indices with respect to position and momentum states, is used to define an uncertainty coefficient
which quantifies the uncertainty in the quantum system. It is shown that the uncertainty coefficient is positive, and an
upper bound for it is given.
Various examples demonstrate these ideas.

{\bf PACS:} 03.65.Aa

\section{Introduction}
Uncertainty relations for quantum systems with finite-dimensional Hilbert space have been studied in \cite{U1,U2,U3,U4,U5,U6}.
Other uncertainty relations in various contexts have been studied in \cite{UN1,UN2,UN3,UN4,UN5}.
In this paper we present an alternative approach which is based on the Lorenz values and the Gini index, which are quantities used extensively in Mathematical Economics 
for the study of wealth distributions\cite{Gini,Gini1}. Here we use these quantities for the study of probability distributions that are the outcomes of quantum measurements related to an orthonormal basis.

In a recent paper \cite{new} we have used the Lorenz values and the Gini index for the study of $Q$-functions. In this paper we use these quantities for the study of probability distributions
related to position and momentum states. 
Several properties of the Lorenz values and the Gini index in a quantum context, are proven.

 A qualitative uncertainty principle, is that the probability distributions related to positions and momenta cannot be both `narrow'. This is quantified with an uncertainty coefficient
that is defined in terms of the supremum over all density matrices, of the sum of the two Gini indices for the position and momentum states.
We prove in proposition \ref{pro134} that the uncertainty coefficient is a positive number, and we give an upper bound for it.

In section 2 we present briefly some basic concepts for finite quantum systems, in order to define the notation.
In section 3 we introduce Lorenz values in a quantum context and discuss their properties.
We also introduce the concept of comonotonicity which has been used in the context of Choquet integrals with applications in Mathematical Economics and Artificial 
Intelligence\cite{D0,D1,D2,D3}.
Here we use comonotonicity in connection with Lorenz values in a quantum context. 

In section 4 we introduce the Gini index in a quantum context and discuss its properties.
We then introduce an uncertainty coefficient $\eta(d)$ that quantifies the uncertainty principle.
We show that $\eta(d)$ is greater than zero, and give an upper limit for it (Eq.(\ref{F2})).
In section 5 we discuss an example.
We conclude in section 6 with a discussion of our results.

\section{Quantum systems with variables in ${\mathbb Z}_d$ and measurements}

We consider a quantum system with variables in ${\mathbb Z}_d$, the ring of integers modulo $d$ where $d$ is an odd integer (e.g.,\cite{V1}).
$H_d$ is the $d$-dimensional Hilbert space describing these systems. 
$|X;r\rangle$ where $r\in {\mathbb Z}_d$, is an orthonormal basis which we call position basis (the $X$ in this notation is not a variable, but it simply indicates position basis).
Through a Fourier transform we get another orthonormal basis that we call momentum basis:
\begin{eqnarray}
&&|{P};r\rangle=F|{X};r\rangle;\;\;
F=\frac{1}{\sqrt d}\sum _{r,s}\omega^{rs}\ket{X;r}\bra{X;s}\nonumber\\
&&\omega=\exp \left (i\frac{2\pi }{d}\right );\;\;r,s\in {\mathbb Z}_d.
\end{eqnarray}
The displacement operators $Z^\alpha, X^\beta$ in the phase space ${\mathbb Z}_d\times {\mathbb Z}_d$, are given by
\begin{eqnarray}\label{2A}
&&Z^\alpha= \sum \omega^{\alpha m}|X; m\rangle\bra{X;m}\nonumber\\
&&X^\beta =\sum |X; m+\beta\rangle \bra{X; m}.
\end{eqnarray}
where $\alpha$, $\beta \in {\mathbb Z}_d$. 
General displacement operators are the unitary operators
\begin{eqnarray}
&&D(\alpha, \beta)=Z^\alpha X^\beta \omega^{-2^{-1}\alpha \beta}\nonumber\\&&[D(\alpha, \beta)]^{\dagger}=D(-\alpha, -\beta);\;\;\;\alpha, \beta \in {\mathbb Z}_d
\end{eqnarray}
The $2^{-1}=\frac{d+1}{2}$ exists in ${\mathbb Z}_d$ because $d$ is an odd integer.
Coherent states in this context are defined as
\begin{equation}\label{bba}
\ket{\alpha, \beta}_{\rm coh}=D(\alpha, \beta)\ket{f}
\end{equation}
where $\ket{f}$ is a fiducial vector, different from position or momentum states so that we get a non-trivial set of $d^2$ states.
They obey the resolution of the identity
\begin{equation}
\frac{1}{d}\sum_{\alpha,\beta} \ket{\alpha, \beta}_{\rm coh}  \;_{\rm coh}\bra{\alpha, \beta}={\bf 1}.
\end{equation}

If $\rho$ is a density matrix, the probability distributions related to the position and momentum basis, are
\begin{eqnarray}\label{62}
&&{\cal P}_X(r|\rho)=\bra{X;r}\rho\ket{X;r};\;\sum _{r=0}^{d-1}{\cal P}_X(r|\rho)=1\nonumber\\
&&{\cal P}_P(r|\rho)=\bra{P;r}\rho\ket{P;r};\;\sum _{r=0}^{d-1}{\cal P}_P(r|\rho)=1.
\end{eqnarray}
Below we use the following notation for the projectors
\begin{eqnarray}\label{333}
&&\Pi_X(r)=\ket{X;r}\bra{X;r};\;\;\;\Pi_P(r)=\ket{P;r}\bra{P;r}\nonumber\\
&&\sum \Pi_X(r)=\sum \Pi_P(r)={\bf 1}
\end{eqnarray}

\section{Lorenz values}

Lorenz values  are used extensively in Mathematical Economics for the study of inequality in the distribution of wealth.
We propose similar quantities for the study of inequality in the probability distributions of a quantum state, with respect to an orthonormal basis.
 Lorenz values require  ordering of the $d$ values of the probability distribution, and this leads 
to the  ordering permutation of a density matrix.

Let $\pi_X$ be the permutation for which the ${\cal P}_X(r|\rho)$ are ordered in ascending order:
\begin{eqnarray}\label{246}
{\cal P}_X(\pi_X(0)|\rho)\le ...\le {\cal P}_X(\pi_X(d-1)|\rho).
\end{eqnarray}
We refer to it as the ordering permutation of the density matrix $\rho$ with respect to the $X$-basis.
We can also define the ordering permutation $\pi_P$ with respect to the $P$-basis:
\begin{eqnarray}
{\cal P}_P(\pi_P(0)|\rho)\le  ...\le {\cal P}_P(\pi_P(d-1)|\rho).
\end{eqnarray}

The Lorenz values ${\cal L}_X(\ell;\rho)$ and ${\cal L}_P(\ell;\rho)$ are:
\begin{eqnarray}\label{LO}
&&{\cal L}_X(\ell;\rho)={\cal P}_X(\pi_X(0)|\rho)+ ...+{\cal P}_X(\pi_X(\ell)|\rho)\nonumber\\
&&{\cal L}_P(\ell;\rho)={\cal P}_P(\pi_P(0)|\rho)+ ...+{\cal P}_P(\pi_P(\ell)|\rho)\nonumber\\
\end{eqnarray}
where $\ell=0,...,d-1$. They are increasing functions of $\ell$ and 
\begin{eqnarray}\label{31}
{\cal L}_X(d-1;\rho)={\cal L}_P(d-1;\rho)=1.
\end{eqnarray}
Also
\begin{eqnarray}\label{bbb}
&&{\cal L}_X[\ell;D(\alpha, \beta)^{\dagger}\rho D(\alpha, \beta)]={\cal L}_X(\ell;\rho)\nonumber\\&&{\cal L}_P[\ell;D(\alpha, \beta)^{\dagger}\rho D(\alpha, \beta)]={\cal L}_P(\ell;\rho)\nonumber\\
&&{\cal L}_X[\ell;F^{\dagger}\rho F]={\cal L}_P(\ell;\rho)\nonumber\\&&{\cal L}_P[\ell;F^{\dagger}\rho F]={\cal L}_X(\ell;\rho)
\end{eqnarray}
We next prove that
\begin{eqnarray}\label{30}
0\le {\cal L}_X(\ell;\rho)\le \frac{\ell+1}{d}.
\end{eqnarray}
We consider two cases:
\begin{itemize}
\item[(i)]
If  ${\cal P}_X(\pi_X(\ell)|\rho)\le \frac{1}{d}$ we get
\begin{eqnarray}
{\cal L}_X(\ell;\rho)\le (\ell +1){\cal P}_X(\pi_X(\ell)|\rho) \le  \frac{\ell +1}{d}.
\end{eqnarray}
This proves Eq.(\ref{30}) for this case.
\item[(ii)]
We start from Eq.(\ref{62}) which we rewrite as
\begin{eqnarray}
{\cal P}_X(\pi_X(0)|\rho)+...+{\cal P}_X(\pi_X(d-1)|\rho)=1.
\end{eqnarray}
For $k>\ell$ we have ${\cal P}_X(\pi_X(k)|\rho)\ge{\cal P}_X(\pi_X(\ell)|\rho)$, and we replace the ${\cal P}_X(\pi_X(k)|\rho)$ with ${\cal P}_X(\pi_X(\ell)|\rho)$. We get
\begin{eqnarray}\label{86}
{\cal L}_X(\ell;\rho)+(d-\ell-1){\cal P}(X;\pi_X(\ell)|\rho)\le 1.
\end{eqnarray}

If  ${\cal P}_X(\pi_X(\ell)|\rho)>\frac{1}{d}$ we get
\begin{eqnarray}\label{8}
&&1\ge {\cal L}_X(\ell;\rho)+(d-\ell-1){\cal P}_X(\pi_X(\ell)|\rho)\nonumber\\&&\ge {\cal L}_X(\ell;\rho)+\frac{d-\ell-1}{d}.
\end{eqnarray}
From this follows Eq.(\ref{30}) for this case.
\end{itemize}

As an example we consider the case $\rho =\frac{1}{d}{\bf 1}$. 
In this case
\begin{eqnarray}\label{n1}
&&{\cal P}_X\left (r|\frac{1}{d}{\bf 1}\right)={\cal P}_P\left (r|\frac{1}{d}{\bf 1}\right )=\frac{1}{d}\nonumber\\&&{\cal L}_X\left (\ell;\frac{1}{d}{\bf 1}\right )={\cal L}_P\left (\ell;\frac{1}{d}{\bf 1}\right )=\frac{\ell +1}{d}.
\end{eqnarray}

We also prove that the $d$ density matrices corresponding to position states 
\begin{eqnarray}
\rho =\ket{X;a}\bra{X;a};\;\;\;a=0,...,d-1,
\end{eqnarray}
are the only ones for which
\begin{eqnarray}\label{e4}
&&{\cal L}_X(\ell;\rho )=0\;\;{\rm if}\;\;\ell\le d-2\nonumber\\
&&{\cal L}_X(d-1;\rho )=1.
\end{eqnarray}
Analogous result holds for the momentum states.

In order to get Eq.(\ref{e4}) we need
\begin{eqnarray}
{\cal P}_X(r|\rho)=\delta(r,a),
\end{eqnarray}
and this occurs only for position states.

\subsection{Comonotonicity}

Comonotonicity is a concept which has been used in the context of Choquet integrals with applications in Mathematical Economics and Artificial 
Intelligence\cite{D0,D1,D2,D3}, Quantum theory\cite{Q1,Q2,Q3}, etc.
Here we use comonotonicity in connection with Lorenz values in a quantum context. 

Comonotonic density matrices  have the same ordering permutation.
We show that the Lorenz values are in general superadditive functions and that they become  additive functions for comonotonic density matrices.
If $\rho$ and $\sigma$ are comonotonic density matrices and $\pi_X$ is their ordering permutation  then
$\pi_X$ is also the ordering permutation of $\lambda_1\rho+\lambda_2\sigma$ where $\lambda_1, \lambda_2$ are probabilities.

We next show that the Lorenz values are superadditive functions.
If $\lambda_1,\lambda_2$ are probabilities
\begin{eqnarray}\label{hhh}
&&{\cal L}_X(\ell;\lambda_1\rho+\lambda _2\sigma)\ge \lambda _1{\cal L}_X(\ell;\rho)+\lambda_2{\cal L}_X(\ell;\sigma)\nonumber\\
&&\lambda_1+\lambda_2=1;\;\;\;0\le \lambda_1,\lambda_2\le 1.
\end{eqnarray}

In the special case that $\rho$ and $\sigma$ are comonotonic density matrices then the inequality of Eq.(\ref{hhh}) becomes equality:
\begin{eqnarray}\label{BB1}
&&{\cal L}_X(\ell;\lambda_1\rho+\lambda _2\sigma)=\lambda_1{\cal L}_X(\ell;\rho)+\lambda_2{\cal L}_X(\ell;\sigma)\nonumber\\
&&\lambda_1+\lambda_2=1;\;\;\;0\le \lambda_1,\lambda_2\le 1.
\end{eqnarray}
Similar results hold for ${\cal L}_P(\ell;\rho)$.

In order to prove this we note that
\begin{eqnarray}\label{V1}
&&{\cal L}_X(\ell;\lambda_1\rho+\lambda _2\sigma)=
{\cal P}_X(\pi_X(0)|\lambda_1\rho+\lambda _2\sigma)+ \nonumber\\&&...+{\cal P}_X(\pi_X(\ell)|\lambda_1\rho+\lambda _2\sigma)\nonumber\\&&=
\lambda_1[{\cal P}_X(\pi_X(0)|\rho)+ ...+{\cal P}_X(\pi_X(\ell)|\rho)]\nonumber\\&&+\lambda_2[{\cal P}_X(\pi_X(0)|\sigma)+ ...+{\cal P}_X(\pi_X(\ell)|\sigma)]
\end{eqnarray}
$\widetilde \pi_X$ is the ordering permutation for $\lambda_1\rho+\lambda _2\sigma$, and in general it will not be the 
ordering permutation for $\rho$ and $\sigma$. Therefore
\begin{eqnarray}\label{V2}
&&{\cal P}_X(\widetilde \pi_X(0)|\rho)+ ...+{\cal P}_X(\widetilde \pi_X(\ell)|\rho)\ge {\cal L}_X(\ell;\rho)\nonumber\\
&&{\cal P}_X(\widetilde \pi_X(0)|\sigma)+ ...+{\cal P}_X(\widetilde \pi_X(\ell)|\sigma)\ge {\cal L}_X(\ell;\sigma)\nonumber\\
\end{eqnarray}
Combining Eqs(\ref{V1}), (\ref{V2}) we prove Eq.(\ref{hhh}).

In the special case of comonotonic $\rho, \sigma$, the $\rho$, $\sigma$,  $\lambda_1\rho+\lambda _2\sigma$ have the same ordering permutation and the inequalities in Eq.(\ref{V2}) become equalities.
From this follows Eq.(\ref{BB1}).

\section{The Gini index  and the uncertainty principle}

The Gini index with respect to the position basis is defined
\begin{eqnarray}\label{84}
&&{\cal G}_X(\rho)=\frac{1}{\cal N}\sum _{\ell=0}^{d-1}\left[{\cal L}_X\left (\ell;\frac{1}{d}{\bf 1}\right )-{\cal L}_X(\ell;\rho)\right ]\nonumber\\&&=\frac{1}{\cal N}\sum _{\ell=0}^{d-1}\left[\frac{\ell +1}{d}-{\cal L}_X(\ell;\rho)\right ]\nonumber\\
&&{\cal N}=\sum _{\ell =0}^{d-1}{\cal L}_X \left (\ell;\frac{1}{d}{\bf 1}\right )=\sum _{\ell=0}^{d-1}\frac{\ell+1}{d}=\frac{d+1}{2}\nonumber\\
\end{eqnarray}
Another equivalent definition is
\begin{eqnarray}\label{85}
{\cal G}_X(\rho)&=&1-\frac{2}{d+1}\sum _{\ell=0}^{d-1}{\cal L}_X(\ell;\rho)\nonumber\\&=&\frac{d-1}{d+1}-\frac{2}{d+1}\sum _{\ell=0}^{d-2}{\cal L}_X(\ell;\rho)\nonumber\\&=&
1-\frac{2}{d+1}[d{\cal P}_X(\pi_X(0)|\rho)\nonumber\\&+&(d-1){\cal P}_X(\pi_X(1)|\rho)+...\nonumber\\&+&{\cal P}_X(\pi_X(d-1)|\rho)].
\end{eqnarray}
In a similar way to ${\cal G}_X(\rho)$ we define the Gini index with respect to the momentum basis ${\cal G}_P(\rho)$.
We also define the sum
\begin{eqnarray}
{\cal G}_{XP}(\rho)={\cal G}_X(\rho)+{\cal G}_P(\rho).
\end{eqnarray}

The Gini index ${\cal G}_X(\rho)$ quantifies  how close is the ${\cal P}(X;r|\rho)$ to a uniform distribution (which describes maximum uncertainty).
Small (large) values ${\cal G}_X(\rho)$ indicate large (small) uncertainty in the probability distribution ${\cal P}(X;r|\rho)$.

\begin{proposition}\label{pro56}
\mbox{}
\begin{itemize}
\item[(1)]
\begin{eqnarray}
0\le {\cal G}_X(\rho)\le\frac{d-1}{d+1};\;\;\;0\le {\cal G}_P(\rho)\le\frac{d-1}{d+1}.
\end{eqnarray}
The ${\cal G}_X(\rho)$ indicates the uncertainty in the outcome with the measurements $\Pi_X(r)$.
${\cal G}_X(\rho)=\frac{d-1}{d+1}$ indicates a certain outcome, while 
${\cal G}_X(\rho)=0$ indicates the most uncertain outcome.
Similar comment can be made for ${\cal G}_P(\rho)$ for the measurements $\Pi_P(r)$.
\item[(2)]
${\cal G}_X(\rho)=\frac{d-1}{d+1}$ 
only for the $d$ density matrices $\rho =\ket{X;a}\bra{X;a}$.
Similarly, ${\cal G}_P(\rho)=\frac{d-1}{d+1}$ 
only for the $d$ density matrices $\rho =\ket{P;a}\bra{P;a}$.

\item[(3)]
It is impossible to have ${\cal G}_X(\rho)={\cal G}_P(\rho)=\frac{d-1}{d+1}$. Therefore
\begin{eqnarray}\label{PP}
0\le {\cal G}_{XP}(\rho)<2\frac{d-1}{d+1}.
\end{eqnarray}
In fact ${\cal G}_{XP}(\rho)$ cannot take values which are arbitrarily close to $2\frac{d-1}{d+1}$.
\end{itemize}
\end{proposition}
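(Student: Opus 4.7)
The plan is to derive all three claims by directly exploiting the bounds on Lorenz values in Eq.~(\ref{30}), the characterization of extremal Lorenz values in Eq.~(\ref{e4}), and a compactness argument for the last part.

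For part~(1), I would work from the rewritten form in Eq.~(\ref{84}),
\begin{equation*}
\mathcal{G}_X(\rho) = \frac{2}{d+1}\sum_{\ell=0}^{d-1}\left[\frac{\ell+1}{d} - \mathcal{L}_X(\ell;\rho)\right].
\end{equation*}
The lower bound $\mathcal{G}_X(\rho)\geq 0$ follows term-by-term from $\mathcal{L}_X(\ell;\rho)\leq \frac{\ell+1}{d}$. For the upper bound I would use $\mathcal{L}_X(\ell;\rho)\geq 0$ together with the normalization $\mathcal{L}_X(d-1;\rho)=1$ from Eq.~(\ref{31}): the sum is maximized when $\mathcal{L}_X(\ell;\rho)=0$ for $\ell\leq d-2$, yielding $\frac{d+1}{2}-1=\frac{d-1}{2}$, so $\mathcal{G}_X(\rho)\leq \frac{d-1}{d+1}$. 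The argument for $\mathcal{G}_P$ is identical by substituting $P$ for $X$ throughout.

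For part~(2), equality in the upper bound requires $\mathcal{L}_X(\ell;\rho)=0$ for every $\ell\leq d-2$ together with $\mathcal{L}_X(d-1;\rho)=1$. By the uniqueness result already stated in Eq.~(\ref{e4}), this occurs precisely for the $d$ density matrices $\rho=\ket{X;a}\bra{X;a}$, and analogously for momentum. For the first statement of part~(3), I would argue by contradiction: if $\mathcal{G}_X(\rho)=\mathcal{G}_P(\rho)=\frac{d-1}{d+1}$ then part~(2) would force $\rho=\ket{X;a}\bra{X;a}=\ket{P;b}\bra{P;b}$ for some $a,b$, which is impossible since $|\bra{P;b}X;a\rangle|^2=1/d\neq 1$.

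For the refined claim that $\mathcal{G}_{XP}(\rho)$ cannot even come arbitrarily close to $2\frac{d-1}{d+1}$, I would invoke compactness. The probabilities ${\cal P}_X(r|\rho)$ and ${\cal P}_P(r|\rho)$ are continuous in $\rho$. Each Lorenz value $\mathcal{L}_X(\ell;\rho)$ is the sum of the smallest $\ell+1$ probabilities, and can be written as $\min_{|S|=\ell+1}\sum_{r\in S}{\cal P}_X(r|\rho)$, which is a minimum of finitely many continuous functions and hence continuous; the same holds for $\mathcal{L}_P$. Therefore $\mathcal{G}_{XP}$ is continuous on the compact convex set of density matrices, so it attains its supremum at some $\rho^{*}$. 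If the supremum equaled $2\frac{d-1}{d+1}$, then $\rho^{*}$ would saturate both Gini bounds simultaneously, contradicting what was just proved; hence the supremum is strictly less than $2\frac{d-1}{d+1}$, giving Eq.~(\ref{PP}) together with the stated strengthening.

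The main obstacle I anticipate is precisely this last step: the pointwise statement ``both maxima cannot be attained together'' is not by itself enough. It has to be combined with continuity of the ordered partial sums and compactness of the state space to upgrade the strict inequality at each point to a uniform gap. Once continuity of $\mathcal{L}_X(\ell;\cdot)$ is observed via the $\min$-over-subsets representation, the rest of the proposition is a clean consequence of Eqs.~(\ref{30}) and~(\ref{e4}).
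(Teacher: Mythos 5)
Your proof is correct, and for parts (1), (2), and the first claim of part (3) it follows essentially the same route as the paper: the lower bound on ${\cal G}_X$ from Eq.~(\ref{30}), the upper bound from non-negativity of the Lorenz values together with ${\cal L}_X(d-1;\rho)=1$, the characterization of the extremal case via Eq.~(\ref{e4}), and the observation that a density matrix cannot simultaneously be a position eigenstate and a momentum eigenstate. Where you genuinely diverge is the final claim, that ${\cal G}_{XP}(\rho)$ cannot approach $2\frac{d-1}{d+1}$. The paper argues perturbatively: it supposes ${\cal G}_X(\rho)=\frac{d-1}{d+1}-\epsilon_X$ with $\epsilon_X$ a ``non-negative infinitesimal,'' deduces that $\rho=(1-\epsilon_X)\ket{X;a}\bra{X;a}+\tau$ with ${\rm Tr}(\tau)=\epsilon_X$, and concludes that ${\cal P}_P(r|\rho)$ is then nearly uniform, forcing ${\cal G}_P(\rho)$ to be infinitesimal rather than near-maximal. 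Your route instead establishes continuity of $\ell\mapsto{\cal L}_X(\ell;\cdot)$ via the representation $\min_{|S|=\ell+1}\sum_{r\in S}{\cal P}_X(r|\rho)$ and invokes compactness of the state space, so that the supremum is attained and the pointwise impossibility from part (3) upgrades automatically to a uniform gap. Your argument is arguably the more rigorous of the two --- the paper's talk of infinitesimals is informal, and the $\min$-over-subsets device neatly disposes of the worry that the ordering permutation jumps discontinuously in $\rho$. What the paper's approach buys in exchange is quantitative content: it exhibits explicitly \emph{why} near-saturation of ${\cal G}_X$ collapses ${\cal G}_P$ (the Fourier transform spreads a concentrated distribution into a nearly uniform one), which hints at an effective trade-off between $\epsilon_X$ and $\epsilon_P$, whereas the compactness argument is non-constructive and gives no estimate of the size of the gap (which is what proposition~\ref{pro134} then needs the explicit example of Eq.~(\ref{100}) for anyway). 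Both arguments are valid; yours is self-contained and closes the logical gap you correctly identified, namely that a pointwise strict inequality does not by itself bound the supremum away from the endpoint.
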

\begin{proof}
\mbox{}
\begin{itemize}
\item[(1)]
From Eqs.(\ref{30}),(\ref{84}) follows that $0\le {\cal G}_X(\rho)$.
Then {Eq.(\ref{85})} and the fact that the ${\cal L}_X(\ell;\rho)$ are non-negative numbers, prove that ${\cal G}_X(\rho)\le\frac{d-1}{d+1}$.

\item[(2)]
{Eq.(\ref{85})} shows that we get ${\cal G}_X(\rho)=\frac{d-1}{d+1}$  only if
\begin{eqnarray}
&&{\cal L}_X(\ell;\rho )=0\;\;{\rm if}\;\;\ell\le d-2\nonumber\\
&&{\cal L}_X(d-1;\rho )=1.
\end{eqnarray}
We have seen earlier that this occurs 
only for the $d$ density matrices $\rho =\ket{X;a}\bra{X;a}$.
\item[(3)]
We have just proved that ${\cal G}_X(\rho)=\frac{d-1}{d+1}$  only 
for the $d$ density matrices $\rho =\ket{X;a}\bra{X;a}$, in which case ${\cal G}_P(\rho)=0$.
Similarly ${\cal G}_P(\rho)=\frac{d-1}{d+1}$  only 
for the $d$ density matrices $\rho =\ket{P;a}\bra{P;a}$, in which case ${\cal G}_X(\rho)=0$. Therefore it is impossible to have ${\cal G}_X(\rho)={\cal G}_P(\rho)=\frac{d-1}{d+1}$
and from this follows Eq.(\ref{PP}).

We next assume that a density matrix $\rho$ has 
\begin{eqnarray}\label{890}
{\cal G}_X(\rho)=\frac{d-1}{d+1}-\epsilon _X;\;\;\;{\cal G}_P(\rho)=\frac{d-1}{d+1}-\epsilon _P
\end{eqnarray}
where $\epsilon _X$ and $\epsilon _P$ are non-negative infinitesimals, so that ${\cal G}_{XP}(\rho)$ is arbitrarily close to $2\frac{d-1}{d+1}$.
In this case ${\cal L}(\ell,\rho)=\epsilon _{\ell X}$ where $\epsilon _{\ell X}$ are non-negative infinitesimals and
\begin{eqnarray}\label{850}
\frac{2}{d+1}\sum _{\ell=0}^{d-2}\epsilon _{\ell X}=\epsilon _X.
\end{eqnarray}
In this case one of the ${\cal P}_X(r|\rho)$ is equal to $1-\epsilon_X$, and the others are non-negative infinitesimals $e _{Xr}$ (with $\sum e_{Xr}=\epsilon_X$).
It follows that $\rho$ can be written as
\begin{eqnarray}
\rho=(1-\epsilon _X)\ket{X;a}\bra{X;a}+\tau;\;\;\;{\rm Tr}(\tau)=\epsilon_X.
\end{eqnarray}
where $\tau$ is not a density matrix in general. Indeed in this case
\begin{eqnarray}
&&{\cal P}_X(r|\rho)=(1-\epsilon_X)\delta(r,a)+e_{Xr}\nonumber\\&&e_{Xr}=\bra{X;r}\tau\ket{X;r}.
\end{eqnarray}
But then
\begin{eqnarray}
&&{\cal P}_P(r|\rho)=(1-\epsilon_X)\frac{1}{d}+{\widetilde e}_{Xr}\nonumber\\&&{\widetilde e}_{Xr}=\bra{P;r}\tau\ket{P;r}.
\end{eqnarray}
The ${\widetilde e}_{Xr}$ are non-negative infinitesimals, because the $e _{Xr}$ are non-negative infinitesimals.
Therefore 
\begin{eqnarray}
{\cal L}_P(\ell,\rho)=(1-\epsilon_X)\frac{\ell+1}{d}+\epsilon_{\ell P}
\end{eqnarray}
where $\epsilon _{\ell P}$ are non-negative infinitesimals.
But in this case ${\cal G}_P(\rho)$ will have an infinitesimal value rather than the value given in Eq.(\ref{890}).
This proves that ${\cal G}_{XP}(\rho)$ cannot take values which are arbitrarily close to $2\frac{d-1}{d+1}$.
\end{itemize}
\end{proof}
The above proposition shows that it is impossible to have certain outcome with the measurements $\Pi_X(r)$ on an ensemble described by the density matrix $\rho$, and also certain outcome
with the measurements $\Pi_P(r)$ on another ensemble described by the same density matrix $\rho$.
This is an uncertainty principle in terms of the Gini index, which is quantified below.

We next show that
\begin{eqnarray}\label{300}
&&{\cal G}_{XP}[D(\alpha, \beta)^{\dagger}\rho D(\alpha, \beta)]={\cal G}_{XP}(\rho)\nonumber\\
&&{\cal G}_{XP}[F^{\dagger}\rho F]={\cal G}_{XP}(\rho)
\end{eqnarray}
Using Eq.(\ref{bbb})  we prove that
\begin{eqnarray}
&&{\cal G}_X[D(\alpha, \beta)^{\dagger}\rho D(\alpha, \beta)]={\cal G}_X(\rho)\nonumber\\&&{\cal G}_P[D(\alpha, \beta)^{\dagger}\rho D(\alpha, \beta)]={\cal G}_P(\rho)\nonumber\\
&&{\cal G}_X[F^{\dagger}\rho F]={\cal G}_P(\rho)\nonumber\\&&{\cal G}_P[F^{\dagger}\rho F]={\cal G}_X(\rho).
\end{eqnarray}
From this follows Eq.(\ref{300}).

Using this Eq.(\ref{300}) with $\rho=\ket{f}\bra{f}$ we find that
all $d^2$ coherent states of Eq.(\ref{bba}) have the same ${\cal G}_{XP}(\rho)$ (which depends on the fiducial vector):
\begin{eqnarray}
{\cal G}_{XP}(\ket{\alpha, \beta}_{\rm coh}  \;_{\rm coh}\bra{\alpha, \beta})={\cal G}_{XP}(\ket{f}\bra{f}).
\end{eqnarray}

As an example we consider the density matrix $\rho=\frac{1}{d}{\bf 1}$ we get ${\cal G}_X\left (\frac{1}{d}{\bf 1}\right)={\cal G}_P\left (\frac{1}{d}{\bf 1}\right)=0$ and therefore
\begin{eqnarray}
{\cal G}_{XP}\left (\frac{1}{d}{\bf 1}\right)=0.
\end{eqnarray}
In this sense this is the density matrix with maximum uncertainty.

We also consider the density matrix $\rho =\ket{X;a}\bra{X;a}$ and we get ${\cal G}_X(\ket{X;a}\bra{X;a})=\frac{d-1}{d+1}$ and ${\cal G}_P\left (\ket{X;a}\bra{X;a}\right )=0$. Therefore
\begin{eqnarray}
{\cal G}_{XP}(\ket{X;a}\bra{X;a})=\frac{d-1}{d+1}.
\end{eqnarray}
Similarly
\begin{eqnarray}
{\cal G}_{XP}(\ket{P;a}\bra{P;a})=\frac{d-1}{d+1}.
\end{eqnarray}

We next show that the Gini index is a subadditive function.
If $\lambda_1, \lambda_2$ are probabilities, the following inequality holds:
\begin{eqnarray}\label{D1}
&&{\cal G}_X(\lambda_1\rho+\lambda _2\sigma)\le \lambda_1{\cal G}_X(\rho)+\lambda_2{\cal G}_X(\sigma)\nonumber\\
&&\lambda_1+\lambda_2=1;\;\;\;0\le \lambda_1,\lambda_2\le 1.
\end{eqnarray}
In the special case of $\rho$ and $\sigma$ are comonotonic density matrices with respect to the $X$-basis,  this becomes equality:
\begin{eqnarray}\label{BB2}
{\cal G}_X(\lambda_1\rho+\lambda _2\sigma)=\lambda_1{\cal G}_X(\rho)+\lambda_2{\cal G}_X(\sigma)
\end{eqnarray}
Analogous result holds for ${\cal G}_P(\rho)$.

To prove this we note that {Eq.(\ref{D1})} follows from Eq.(\ref{hhh}). For comonotonic density matrices we use Eq.(\ref{BB1}) to prove Eq.(\ref{BB2}).

From Eq.(\ref{D1}) follows immediately that ${\cal G}_{XP}(\rho)$ is a subadditive function.
If $\lambda_1, \lambda_2$ are probabilities, the following inequality holds:
\begin{eqnarray}\label{D10}
&&{\cal G}_{XP}(\lambda_1\rho+\lambda _2\sigma)\le \lambda_1{\cal G}_{XP}(\rho)+\lambda_2{\cal G}_{XP}(\sigma)\nonumber\\
&&\lambda_1+\lambda_2=1;\;\;\;0\le \lambda_1,\lambda_2\le 1.
\end{eqnarray}

Therefore
\begin{eqnarray}\label{sdf}
&&{\cal G}_{XP}\left (\sum \lambda_a\ket{X;a}\bra{X;a}\right)\le \frac{d-1}{d+1}\nonumber\\&&\sum \lambda_a=1;\;\;\lambda_a\ge 0.
\end{eqnarray}

The uncertainty coefficient is
\begin{eqnarray}\label{UN}
\eta (d)=2\frac{d-1}{d+1}-{\mathfrak G}(d),
\end{eqnarray}
where  ${\mathfrak G}(d)$ is the supremum of ${\cal G}_{XP}(\rho)$ over the { set ${\mathfrak R}$ of all density matrices}:
\begin{eqnarray}
{\mathfrak G}(d)=\sup _{\rho \in {\mathfrak R}}{\cal G}_{XP}(\rho).
\end{eqnarray}

\begin{proposition}\label{pro134}
\begin{eqnarray}\label{F2}
0<\eta (d)\le \frac{d-1}{d+1}\frac{\sqrt{d}}{1+\sqrt{d}}.
\end{eqnarray}
\end{proposition}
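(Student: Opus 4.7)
The statement splits into two inequalities, and they need quite different treatments, so I would prove them separately.

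For the strict inequality $\eta(d)>0$, the natural route is a compactness/continuity argument built on part (3) of Proposition \ref{pro56}. First I would note that the sorted probability vectors (hence all ${\cal L}_X(\ell;\rho)$ and ${\cal L}_P(\ell;\rho)$) are continuous functions of $\rho$, since sorting a finite real tuple is continuous; therefore ${\cal G}_{XP}:{\mathfrak R}\to\mathbb{R}$ is continuous. Because ${\mathfrak R}$ is compact, the supremum ${\mathfrak G}(d)$ is attained at some $\rho^\ast$. Part~(3) of Proposition~\ref{pro56} shows ${\cal G}_{XP}(\rho^\ast)<2\frac{d-1}{d+1}$ (there is no density matrix achieving the value, and in fact no sequence approaching it). Hence $\eta(d)=2\frac{d-1}{d+1}-{\mathfrak G}(d)>0$.

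For the upper bound on $\eta(d)$, the plan is to \emph{exhibit} a single density matrix $\rho$ with ${\cal G}_{XP}(\rho)$ large enough. The candidate I would try is the pure state
\begin{equation*}
|\psi\rangle=\alpha\,|X;0\rangle+\beta\,|P;0\rangle,\qquad \alpha,\beta>0,
\end{equation*}
with $\alpha=\beta$. Using $\langle X;r|P;0\rangle=\langle P;r|X;0\rangle=1/\sqrt{d}$, normalization becomes $2\alpha^{2}(1+1/\sqrt d)=1$, i.e.\ $\alpha^{2}=\sqrt d/[2(\sqrt d+1)]$. A direct computation then gives
\begin{equation*}
{\cal P}_X(0|\rho)=(\alpha+\beta/\sqrt d)^{2},\qquad {\cal P}_X(r|\rho)=\beta^{2}/d\ (r\ne 0),
\end{equation*}
and symmetrically for ${\cal P}_P(r|\rho)$ with $\alpha\leftrightarrow\beta$. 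Since $(\alpha+\beta/\sqrt d)^{2}>\beta^{2}/d$, the ordering is trivial, ${\cal L}_X(\ell;\rho)=(\ell+1)\beta^{2}/d$ for $\ell\le d-2$ and ${\cal L}_X(d-1;\rho)=1$, and plugging into \eqref{85} yields
\begin{equation*}
{\cal G}_X(\rho)=\frac{(d-1)(1-\beta^{2})}{d+1},\qquad {\cal G}_P(\rho)=\frac{(d-1)(1-\alpha^{2})}{d+1}.
\end{equation*}
Hence ${\cal G}_{XP}(\rho)=\frac{d-1}{d+1}(2-\alpha^{2}-\beta^{2})=\frac{d-1}{d+1}\cdot\frac{\sqrt d+2}{\sqrt d+1}$, so ${\mathfrak G}(d)\ge \frac{d-1}{d+1}\cdot\frac{\sqrt d+2}{\sqrt d+1}$, and subtracting from $2\frac{d-1}{d+1}$ gives exactly $\eta(d)\le \frac{d-1}{d+1}\cdot\frac{\sqrt d}{1+\sqrt d}$.

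The main obstacle is guessing the right trial state: one needs a $\rho$ that is simultaneously close to a position eigenstate and a momentum eigenstate. The symmetric equal-weight superposition of $|X;0\rangle$ and $|P;0\rangle$ is the natural choice because it manifestly equalizes the roles of $X$ and $P$, and because the smaller $X$-probabilities are all equal (making the Lorenz value completely explicit without further sorting). An alternative tactic, should the above not yield the stated constant, would be to optimize over $\alpha/\beta$ by Lagrange multipliers against the constraint $\alpha^{2}+\beta^{2}+2\alpha\beta/\sqrt d=1$; this reduces to minimizing $\alpha^{2}+\beta^{2}$, whose optimum is $\alpha=\beta$ by AM--GM, confirming the choice. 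Everything else in the argument is a routine evaluation of \eqref{85}.
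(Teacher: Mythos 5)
Your proof is correct. For the upper bound you use exactly the paper's witness: the equal-weight superposition $\ket{s}\propto\ket{X;0}+\ket{P;0}$ of Section 5, and your evaluation reproduces Eq.~(\ref{100}); subtracting from $2\frac{d-1}{d+1}$ gives the stated bound, exactly as the paper does via Eq.~(\ref{F1}). Where you genuinely differ is in establishing $\eta(d)>0$. The paper gets ${\mathfrak G}(d)<2\frac{d-1}{d+1}$ directly from the second half of part (3) of Proposition~\ref{pro56}, namely the claim that ${\cal G}_{XP}(\rho)$ cannot even come arbitrarily close to $2\frac{d-1}{d+1}$ --- a claim proved there by an informal argument with ``non-negative infinitesimals''. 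You instead invoke compactness of ${\mathfrak R}$ and continuity of $\rho\mapsto{\cal G}_{XP}(\rho)$ (the probabilities are linear in $\rho$ and sorting a finite tuple is continuous, so the Lorenz values are continuous) to conclude that the supremum is attained, after which you only need the first, rigorously established half of part (3): no single density matrix has ${\cal G}_X(\rho)={\cal G}_P(\rho)=\frac{d-1}{d+1}$. This is a cleaner route to ${\mathfrak G}(d)<2\frac{d-1}{d+1}$, and as a byproduct it actually supplies a rigorous proof of the paper's ``cannot be arbitrarily close'' assertion; the only extra obligation is the continuity of the sorted probability vector, which you correctly note in passing.
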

\begin{proof}
Taking into account the value of ${\cal G}_{XP}(\rho)$ for an example in Eq.(\ref{100}) below, and also proposition \ref{pro56}, we conclude that
\begin{eqnarray}\label{F1}
\frac{d-1}{d+1}\left (1+\frac{1}{1+\sqrt{d}}\right)\le{\mathfrak G}(d)< 2\frac{d-1}{d+1}.
\end{eqnarray}
It is crucial that ${\cal G}_{XP}(\rho)$ cannot take values which are arbitrarily close to $2\frac{d-1}{d+1}$ (proposition \ref{pro56}), and therefore
${\mathfrak G}(d)$ cannot be equal to $2\frac{d-1}{d+1}$.
From Eq.(\ref{F1}) follows Eq.(\ref{F2}).
\end{proof}

Proposition \ref{pro134} quantifies the uncertainty principle in the context of quantum systems with finite-dimensional Hilbert space.
$\eta (d)=0$ would mean that there exists some density matrix that gives
outcome with no uncertainty with the measurements $\Pi_X(r)$, and also outcome with no uncertainty 
with the measurements $\Pi_P(r)$.
We have shown that $\eta (d)$ is a positive number,
and therefore there is non-zero uncertainty in the outcome of at least one of these two measurements.

Eq.~(\ref{F2}) gives an upper bound for $\eta(d)$.
It is not easy to find the precise value of $\eta (d)$, and this requires further work.
\section{Example}

We consider the density matrix 
\begin{eqnarray}
\rho=\ket{s}\bra{s};\;\;\;\ket{s}=\frac{d^{1/4}}{\sqrt{2d^{1/2}+2}}[\ket{X;0}+\ket{P;0}]
\end{eqnarray}
In this case
{
\begin{eqnarray}
&&{\cal P}_X(\pi_X(r)|\rho)=\frac{1}{2d+2\sqrt{d}};\;\;\;r=0,...,d-2\nonumber\\
&&{\cal P}_X(\pi_X(d-1)|\rho)=\frac{d+1+2\sqrt{d}}{2d+2\sqrt{d}}
\end{eqnarray}}
Then
\begin{eqnarray}
&&{\cal L}_X(\ell;\rho )=\frac{\ell+1}{2d+2\sqrt{d}}\;\;{\rm if}\;\;\ell \le d-2\nonumber\\
&&{\cal L}_X(d-1;\rho )=1
\end{eqnarray}
Therefore
\begin{eqnarray}
{\cal G}_X(\rho )=\frac{d-1}{d+1}\frac{2+\sqrt{d}}{2+2\sqrt{d}}.
\end{eqnarray}
Due to symmetry ${\cal G}_P(\rho )={\cal G}_X(\rho )$ and therefore
\begin{eqnarray}\label{100}
{\cal G}_{XP}(\rho )=\frac{d-1}{d+1}\left (1+\frac{1}{1+\sqrt{d}}\right).
\end{eqnarray}
This has already been used in Eq.(\ref{F1}).

Taking into account Eq.(\ref{300}) we conclude that the density matrices $\rho (\alpha, \beta)=\ket{s}\bra{s}$ where
\begin{eqnarray}
\ket{s}=\frac{1}{\sqrt{2+\frac{2}{\sqrt{d}}}}D(\alpha, \beta)[\ket{X;0}+\ket{P;0}]
\end{eqnarray}
also have the ${\cal G}_{XP}(\rho )$ given in Eq.(\ref{100}).

\section{Discussion}
We studied uncertainty relations for systems with finite-dimensional Hilbert space, using 
the Lorenz values and the Gini index.

Lorenz values require an ordering of the $d$ values of the probability distribution of a density matrix with respect to an orthonormal basis. This leads 
to the  ranking permutation of a density matrix, and to comonotonic density matrices (which have the same ranking permutation).
Lorenz values are defined in Eq.(\ref{LO}).
It is shown that the Lorenz values are superadditive functions in general, and that for comonotonic density matrices they become additive functions.

The Gini index is defined in {Eqs.(\ref{84}),(\ref{85})} and its properties are discussed.
It is shown that the Gini index is an subadditive functions in general, and that for comonotonic density matrices it becomes an additive function.

The uncertainty coefficient is defined in Eq.(\ref{UN}). It is proven that it is positive number, and an upper bound has been given in Eq.~(\ref{F2}).
The uncertainty coefficient quantifies the uncertainty principle for quantum systems with finite-dimensional Hilbert space.

The work brings the Lorenz values and the Gini index, in the context of Quantum Physics.

\end{document}